\algnewcommand{\LeftComment}[1]{\Statex \(\triangleright\) #1}
\definecolor{Gray}{gray}{0.9}
\def\Tr{\mathsf{T}}
\newtheorem{proposition}{\hspace{0pt}\bf Proposition}
\newtheorem{definition}{\hspace{0pt}\bf Definition}
\title{GRAPH-ADAPTIVE ACTIVATION FUNCTIONS FOR GRAPH NEURAL NETWORKS}
\name{Bianca Iancu$^{*}$, Luana Ruiz$^{\dagger}$,  Alejandro Ribeiro$^{\dagger}$ and Elvin Isufi$^{*}$}
\address{$^{*} $ Intelligent Systems Department, Delft University of Technology, Delft, The Netherlands\\ $\dagger $ Department of Electrical and Systems Engineering, University of Pennsylvania, Philadelphia, USA \\ e-mails: bianca.iancu026@gmail.com, \{rubruiz, aribeiro\}@seas.upenn.edu, E.Isufi-1@tudelft.nl}
\begin{document}
\ninept
\maketitle
\begin{abstract}
Activation functions are crucial in graph neural networks (GNNs) as they allow defining a nonlinear family of functions to capture the relationship between the input graph data and their representations. This paper proposes activation functions for GNNs that not only adapt to the graph into the nonlinearity, but are also distributable. To incorporate the feature-topology coupling into all GNN components, nodal features are nonlinearized and combined with a set of trainable parameters in a form akin to graph convolutions. The latter leads to a graph-adaptive trainable nonlinear component of the GNN that can be implemented directly or via kernel transformations, therefore, enriching the class of functions to represent the network data. Whether in the direct or kernel form, we show permutation equivariance is always preserved. We also prove the subclass of graph-adaptive \emph{max} activation functions are Lipschitz stable to input perturbations. Numerical experiments with distributed source localization, finite-time consensus, distributed regression, and recommender systems corroborate our findings and show improved performance compared with pointwise as well as state-of-the-art localized nonlinearities.

\end{abstract}
\begin{keywords}
Activation functions; graph neural networks; graph signal processing; Lipschitz stability; permutation equivariance.

\end{keywords}

\section{Introduction}
\label{sec:intro}
\vskip-.2cm

Graph neural networks (GNNs) are parametric architectures suitable for learning a nonlinear mapping for data defined over graphs such as social, sensor, and biological network data \cite{wu2020comprehensive,gama2020graphs}. By interweaving graph filters with pointwise nonlinearities, GNNs express the function map in a layered form and learn compositions of features that account for the data-topology coupling \cite{defferrard17-cnngraphs,gama18-gnnarchit}. Another property GNNs inherit from graph filters is the distributed implementation \cite{shuman2018distributed,isufi2016autoregressive,segarra17-linear}. Distributed computation facilitates scalability of computation and endows the system with robustness to failures of the processing unit. The latter is fundamental in applications involving consensus, optimization, and control \cite{sandryhaila2014finite, tsitsiklis1986distributed, jadbabaie2003coordination}.

Building on spectral graph theory, \cite{bruna14-deepspectralnetworks} defined graph convolutional neural networks by multiplying feature representations in the Laplacian eigenspace with trainable kernels. Subsequently, \cite{defferrard17-cnngraphs} used finite impulse response (FIR) graph filters to combine features in the vertex domain by means of a polynomial in the Laplacian matrix. The work in \cite{gama18-gnnarchit} follows the same idea but builds a polynomial filter in any graph representation matrix (e.g., adjacency, Laplacian). Differently from \cite{bruna14-deepspectralnetworks}, \cite{defferrard17-cnngraphs,gama18-gnnarchit} are also readily distributable architectures with appropriate choices of graph pooling (i.e., not altering the graph structure; e.g., zero-padding) and with pointwise activation functions. On the other hand, \cite{bianchi2019graph} builds a GNN with distributable autoregressive moving average graph filters \cite{isufi2016autoregressive}, which capture a broader family of functions at the expense of computation cost. Parallel to these efforts, \cite{Velickovic18-GraphAttentionNetworks} proposes attention-like mechanisms to adapt the edge weights to the task at hand. More recently, the work in \cite{Isufi20-EdgeNets} showed that all the above architectures are equivalent and fall under the framework of edge varying GNN (EdgeNet). Altogether, these works capture the data-graph coupling only linearly through graph filters, while they ignore the coupling in the nonlinear pointwise component (e.g., ReLU). To improve the representation power of GNNs, \cite{ruiz18-local} proposed localized activation functions that account for the graph topology by operating on node neighborhoods of different resolutions. However, the latter accounts only for the graph and not the data-topology coupling, since it ignores the edge weights and the data propagation between neighbors. Localized activation functions are also not distributable beyond the one-hop neighborhood, hence missing multi-hop information between nodes.

To address these limitations, we put forward a new family of activation functions that adapt to the data-topology coupling in the surrounding of a node. The nodal features obtained from graph filtering are shifted prior to local-nonlinearization in a form akin to graph convolutions. These nonlinear features are subsequently combined with a set of trainable parameters to accordingly weigh the information at different neighborhood resolutions. The resolution radius is a design parameter and allows adapting the GNN nonlinear component to the task at hand. Besides being graph-adaptive and distributable, these activation functions preserve two properties of theoretical interest for GNNs, namely permutation equivariance and Lipschitz stability to perturbations \cite{gama19-stability}. Concretely, our contribution is threefold.
\begin{enumerate}[noitemsep,nolistsep]
\item We develop a new family of nonlinearities for GNNs that are graph-adaptive to the surrounding of a node and distributable. The first class [Def.~\ref{def:distr_activ_fct}] nonlinearizes shifted features in the surrounding of a node in their direct form. The second class [Def.~\ref{def:distr_kernel_activ_fct}] transforms the shifted features with graph-adaptive kernels prior to nonlinearization.
\item We prove that: $(a)$ the proposed nonlinearities are permutation equivariant [Prop.~\ref{prop:perm_equiv}], i.e. the output of the respective GNN architecture is agnostic to node labeling; $(b)$ the \emph{max} graph-adaptive nonlinearity is Lipschitz stable to input perturbations [Prop.~\ref{thm:lipschitz}].
\item We propose distributed GNN tasks with graph-adaptive nonlinearities for source localization, finite-time consensus, signal denoising, and rating prediction in recommender systems.
\end{enumerate}


\section{Graph Neural Networks}
\label{sec:gnns}
\vskip-.2cm

Consider a graph $\mathcal{G} = (\mathcal{V}, \mathcal{E})$ with vertex set $\mathcal{V}$ of cardinality $|\mathcal{V}| = N$ and edge set $\mathcal{E} \subseteq \mathcal{V} \times \mathcal{V}$ of cardinality $|\mathcal{E}| = M$. An edge is a tuple $e_{ij} = (i, j)$ connecting nodes $i$ and $j$. The neighborhood of node $i$ is the set of nodes $\mathcal{N}_i = \{j | (i,j) \in \mathcal{E}\}$ connected to $i$. Associated to $\mathcal{G}$ is the graph shift operator (GSO) matrix $\bbS \in \reals^{N \times N}$, whose sparsity pattern matches the graph structure. That is, entry $(i,j)$ satisfies $[\bbS]_{i,j} = s_{i,j} \neq 0$ only if $i = j$ or $(i,j) \in \mathcal{E}$. Commonly used GSOs include the adjacency matrix, the graph Laplacian, and their normalized and translated forms.

On the vertices of $\ccalG$, we define a graph signal $\bbx \in \mathbb{R}^N$ whose $i$th component is the value at node $i$. We consider applications where graph signals are processed in a \textit{distributed} fashion. A typical example is in sensor networks without access to a centralized processing unit and where each sensor communicates only with its neighbor sensors. 

\noindent \textbf{Graph convolution.} A graph convolution is defined as a graph filter $\bbH(\bbS)$ that can be written as a polynomial of the GSO $\bbS$ \cite{segarra17-linear}. For an input signal $\bbx$ and filter coefficients $\bbh = [h_0, \ldots, h_K]^\top$, the output $\bby \in \mathbb{R}^N$ of the graph convolutional filter is computed as
\begin{equation} \label{eqn:graph_conv}
    \bby = \bbH(\bbS) \bbx = \sum_{k=0}^{K}h_k \bbS^k \bbx.
\end{equation}
Due to the locality of $\bbS$, graph convolutions can be run distributively. When building the output $\bby$, we need to compute the terms $\bbS\bbx, \ldots, \bbS^K\bbx$. Since $\bbS$ is local, operation $\bbS\bbx$ requires one-hop node exchanges and so, by writing $\bbS^k\bbx = \bbS(\bbS^{k-1}\bbx) = \bbS\bbx^{(k-1)}$, node $i$ can compute signal $\bbx^{(k)}$ through exchange of previous shifted information $\bbx^{(k-1)}$ with its neighbors. This recursion allows for distributed communications and computational cost of order $\ccalO(MK)$, while the trainable parameters defining (\ref{eqn:graph_conv}) are of order $\ccalO(K)$ \cite{segarra17-linear}. 

\noindent \textbf{Graph convolutional neural networks (GCNNs).} We consider a GCNN of $L$ graph convolutional layers followed by a shared fully connected layer per node. Each convolutional layer comprises a bank of graph filters [cf. \eqref{eqn:graph_conv}] and a nonlinearity. At layer $l$, the GCNN takes as input $F_{l-1}$ features $\{\bbx_{l-1}^g\}_{g = 1}^{F_{l-1}}$ from layer $(l-1)$ and produces $F_l$ output features $\{\bbx_{l}^f\}_{f = 1}^F$. Each input feature $\bbx_{l-1}^g$ is processed by a parallel bank of $F_l$ graph filters $\{\bbH_{l}^{fg}(\bbS)\}_f$. The filter outputs are aggregated over the input index $g$ to yield the $f$th convolved feature \vskip-.45cm
\begin{equation}\label{eq.interm_linear}
\bbz_l^f \!=\! \sum_{g = 1}^{F_{l-1}}\bbH_{l}^{fg}(\bbS)\bbx_{l-1}^g \!=\! \sum_{g = 1}^{F_{l-1}}\sum_{k = 0}^Kh_{kl}^{fg}\bbS^k\bbx_{l-1}^g,~\for~ f = 1, \ldots, F_{l}.
\end{equation}
The convolved feature $\bbz_l^f$ is subsequently passed through an activation function $\sigma(\cdot)$ to obtain the $f$th convolutional layer output \vskip-.2cm
\begin{equation}\label{eq.activ_fct}
\bbx_l^f = \sigma(\bbz_l^f), \quad\for~ f = 1, \ldots, F.
\end{equation}

The output features of the last convolutional layer $L$, $\bbx_{L}^1, \ldots, \bbx_{L}^{F_L}$, represent the final convolutional features. These features are interpreted as a collection of $F_L$ graph signals, where on node $i$ we have the $F_L \times 1$ feature vector $\bbchi_{Li} = [x_{Li}^1, \ldots, x_{Li}^{F_L}]^\top$. Each node locally combines the features $\bbchi_{Li}$ with a one-layer perceptron to obtain the output \vskip-.2cm
\begin{equation}
\tby = \bbH_{\text{FC}}\bbchi_{Li}
\end{equation}
where matrix $\bbH_{\text{FC}} \in \mathbb{R}^{F_\text{o}\times F_L}$ maps the $F_L$ convolutional features to the $F_o$ output features (e.g., the number of classes). The parameters in $\bbH_{\text{FC}}$ are shared among nodes to keep the number of trainable parameters independent of the graph dimensions (i.e., $N$ and $M$), but only dependent on the filter order and the number of features and layers.

By grouping all learnable parameters into the set $\ccalH = \{\bbh_l^{fg}; \bbH_{\text{FC}}\}_{lfg}$, we can consider the GCNN as a map $\bbPhi(\cdot)$ that takes as input a graph signal $\bbx$, a GSO $\bbS$, and a set of parameters $\ccalH$ to produce the output \vskip-.2cm
\begin{equation}\label{eq.GCNNout}
\bbPhi(\bbx; \bbS; \ccalH):= \tby.
\end{equation}
The output \eqref{eq.GCNNout} is computed for a training set $\ccalT = \{(\bbx, \bby)\}$ of $|\ccalT|$ pairs, where $\bby$ are the target representations. 


\textbf{Activation functions.} The activation function $\sigma(\cdot)$ in (\ref{eq.activ_fct}) can be any of the conventional pointwise activation functions, such as ReLU ($\sigma(x) = \max(0, x)$), or a localized activation function \cite{ruiz18-local}. Differently from the pointwise, localized activation functions consider the features at the neighborhood of each node $i$ in the nonlinear GCNN component \cite{ruiz18-local}. For a graph signal feature $\bbx$ the localized activation function is based on two local operators, namely:
\begin{itemize} [noitemsep,nolistsep]
    \item \emph{local max operator}, $\text{max}(\bbS, \bbx)$, whose output is a graph signal $\bbz$ with $i$th entry being the maximum value of the signal in the neighborhood, i.e., $z_i = [\text{max}(\bbS, \bbx)]_i=  \max\big(\{x_j:v_j \in \ccalN_i    \}    \big)$;
    \item \emph{local median operator}, $\text{med}(\bbS, \bbx)$, whose output is a graph signal $\bbz$ with $i$th entry being the median value of the signal in the neighborhood, i.e., $z_i = [\text{med}(\bbS, \bbx)]_i=  \text{med}\big(\{x_j:v_j \in \ccalN_i    \}    \big)$.
\end{itemize}
For simplicity, we denote both local operators with the generic local function $f(\bbS, \bbx)$. Then, the localized activation function is defined as \vskip-.2cm
\begin{equation}\label{eq.localizedactiv}
\sigma(\bbx) = \beta \text{ReLU}(\bbx) +  \sum_{k=1}^{K} h_{\sigma k} f(\textbf{S}^k, \bbx).
\end{equation}
where $f(\bbS^k,\bbx)$ applies the local activation function to the signal values of the $k$-hop neighbors and parameters $\beta$ and $\bbh_\sigma \!=\! [h_{\sigma 1}, \ldots,  h_{\sigma K}]^\top$ are \textit{learned} \cite{ruiz18-local}. A GCNN with localized activation functions can thus be written as the map $\bbPhi(\bbx; \bbS; \ccalH)$ with parameters $\ccalH \!=\! \{\bbh_l^{fg};\bbh_{\sigma l}^f; \bbH_{\text{FC}}\}_{lfg}$. 

As it follows from \eqref{eq.localizedactiv}, localized activation functions ignore the edge weights and require information from the non-immediate $k$-hop neighbors, which makes them not distributable. Hence, in distributed settings, the order $K$ in \eqref{eq.localizedactiv} is limited to one. To address this limitation, we propose two new activation functions based on local operators and kernel functions to account for the graph structure and be distributable.                 

\section{Graph-Adaptive Activation Functions}
\label{sec:dln}
\vskip-.2cm

In this section, we first define the  \textit{graph-adaptive localized activation functions}, which are based on arbitrary nonlinear operators acting on the one-hop neighborhood of a node (Section \ref{sbs:dlaf}). Then, we define the \textit{graph-adaptive kernel activation functions} (Section \ref{sbs:dkaf}). Finally, we prove the proposed nonlinearities are permutation equivariant and stable to input perturbations (Section \ref{sbs:properties}).

\subsection{Graph-Adaptive Localized Activation Functions} \label{sbs:dlaf}
To start, let us first define the basic building block for graph-adaptive activation functions: the \textit{shifted localized operator} (SLO).
\begin{definition}[Shifted Localized Operator]  \label{def:shifted_loc_op}
Let $\ccalG$ be an $N$-node graph with shift operator $\bbS$, $\bbx$ a signal, and $\bbS^k \bbx$ the $k$th shifted signal. Consider an arbitrary nonlinear localized function $f(\cdot,\ccalN_i): \reals^N \to \reals^N$, which at node $i$ computes the local nonlinear operation $[f(\bbx,\ccalN_i)]_i = f(\{x_j\}_{j \in \ccalN_i})$. For this choice of $f(\cdot,\ccalN_i)$, the $k$-hop shifted localized operator maps input $\bbx$ to output $\bbz \in \reals^N$ as \vskip-.2cm
\begin{equation} \label{eqn:shifted_localized_operator}
    z_i = [f(\bbS^{k}\bbx, \mathcal{N}_i)]_i = f(\{[\bbS^k\bbx]_j: j \in \ccalN_i\})\text{.}
\end{equation}
\end{definition}
That is, the SLO shifts the signal $k$ times to obtain $\bbS^k \bbx$, and then replaces the value of this signal at each node $i$ by a nonlinear aggregation $f(\cdot, \ccalN_i)$ of the signal values within the one-hop neighborhood of $i$.
The SLO utilizes information locally available at each node to account for the signal-topology coupling for nodes that are $k$-hops away. We can now define \textit{graph-adaptive nonlinear graph filters} as follows.

\begin{definition}[Shifted Localized Graph Filter] \label{def:shifted_loc_graph_filter}
Consider the shifted localized operator induced by an arbitrary nonlinear localized function $f(\cdot,\ccalN_i)$ [cf. Def. \ref{def:shifted_loc_op}], and let $\bbh_{\sigma} = [h_{\sigma 1}, \ldots, h_{\sigma K}]^\top$ be a vector of parameters. The output of the shifted localized graph filter applied to signal $\bbx$, w.r.t. the shift operator $\bbS$, is the signal $\bbz \in \reals^N$ with $i$th entry \vskip-.2cm
\begin{equation} \label{eqn:shifted_loc_graph_filter}
    z_i = \sum_{k=1}^{K} h_{\sigma k} [f(\bbS^{k}\bbx, \mathcal{N}_i)]_i\text{.}
\end{equation}
\end{definition}

Definition \ref{def:shifted_loc_graph_filter} implies the output of a shifted localized graph filter is a linear combination of the SLOs $f(\bbS^k\bbx, \ccalN_i)$ at different resolutions. Hence, shifted localized graph filters inherit the localization property of SLOs, as they incorporate the graph structure up to $K$ hops away accessing only neighboring information. These nonlinear filters can be employed to define \textit{graph-adaptive localized activation functions}.
\begin{definition}[Graph-Adaptive Localized Activation Function] \label{def:distr_activ_fct}
Consider a scalar $\beta$ and vector $\bbh_\sigma \! = \! [h_{\sigma l1}^f, \ldots, h_{\sigma lK}^f]^\top$ of learnable parameters. At layer $l$, the graph-adaptive localized activation function maps the linear features $\bbz_l^f$ [cf. \ref{eq.interm_linear}] to the output features $\bbx_l^f$ following the recursion \vskip-.3cm
\begin{equation}\label{eqn:distr_activ_fct}
    [\bbx_l^f]_i = \beta \text{ReLU}([\bbz_l^f]_i) + \sum_{k=1}^{K} h_{\sigma lk}^{f} [f(\bbS^{k}\bbz_l^f, \mathcal{N}_i)]_i\text{.}
\end{equation}
\end{definition}

Definition \ref{def:distr_activ_fct} combines the pointwise ReLU nonlinearity and the shifted localized graph filters [cf. Def. \ref{def:shifted_loc_graph_filter}] into a single graph-adaptive localized nonlinearity for GNNs. The latter is distributable and localized because, even though the resolution ---given by the shift order--- can be arbitrarily large, the SLO $f(\cdot, \ccalN_i)$ [cf. Def. \ref{def:shifted_loc_op}] operates only in the one-hop neighborhood. In Section 4, we evaluate this activation function for $f(\cdot, \ccalN_i)$ being the max and median, leading to the \textit{graph-adaptive max and median activation function}, respectively.

\subsection{Graph-Adaptive Kernel Activation Functions} \label{sbs:dkaf}

The graph-adaptive kernel activation functions replace the localized nonlinear function $f(\cdot, \ccalN_i)$ by a localized \textit{kernel} to enrich the representation power.
Let $\bbx_i^{(k)} \in \reals^{|\ccalN_i|}$ denote the vector containing $|\ccalN_i|$ copies of the $k$ shifted signal at node $i$, $[\bbS^k\bbx]_i$, i.e. $\bbx_i^{(k)} = \bbone_{|\ccalN_i|} \otimes [\bbS^k\bbx]_i$ where $\bbone_{|\ccalN_i|}$ is the vector of ones of dimension $|\ccalN_i|$ and $\otimes$ is the Kronecker operator. Additionally, consider the vector containing the values at neighbors $j \in \ccalN_i$ of the $k$th shifted signal $\bbS^k\bbx$, i.e. $\bbx^{(k)}_{j \in \ccalN_i} = [\bbS^k \bbx]_{j \in \ccalN_i}$.
With this notation in place, we define a graph kernel operator as follows.

\begin{definition}[Kernel Operator] \label{def:kernel_op}
Let $\ccalG$ be an $N$-node graph with shift operator $\bbS$, $\bbx$ a signal, and $\bbS^k \bbx$ the $k$th shifted signal. Consider an arbitrary kernel function $g(\cdot,\ccalN_i): \reals^{|\ccalN_i|} \to \reals^{|\ccalN_i|}$, which at node $i$ computes the nonlinear local operation $[g(\bbx,\ccalN_i)]_i = g(\widetilde{\bbx}_i, \bbx_{j \in \ccalN_i})$, where $\widetilde{\bbx}_i = \bbone_{|\ccalN_i|} \otimes [\bbx]_i$ is a vector of dimensionality $|\ccalN_i|$ containing copies of signal $\bbx$ at node $i$. The $k$-hop shifted kernel operator mapping from $\bbx$ to $\bbz \in \reals^N$ has the entries \vskip-.2cm
\begin{equation} \label{eqn:kernel_operator}
    z_i = [g(\bbS^k\bbx, \ccalN_i)]_i := g(\bbx^{(k)}_i, \bbx^{(k)}_{j \in \ccalN_i}).
    \end{equation}
\end{definition}
Definition \ref{def:kernel_op} shows the kernel operator first shifts the input signal $\bbx$ as $\bbS^k \bbx$ and then replaces the signal value at each $i$ by the kernel value $g(\cdot,\ccalN_i)$ in the one-hop neighborhood of $i$. Thus, the kernel operator employs only local information at each node to account for the signal-topology coupling up to $k$-hops away from a node. For the kernel function $g(\cdot, \ccalN_i)$ we will employ the Gaussian kernel 
\begin{equation} \label{eqn:gaussian_kernel}
         g(x,y) = \exp\left(-{||\bbx-\bby||^2}/{2\gamma^2}\right),
    \end{equation}
where scalar $\gamma$ is tunable. We can now define \textit{kernel graph filters}.

\begin{definition}[Kernel Graph Filter] \label{def:kernel_graph_filter}
Consider a kernel operator [cf. $\ref{def:kernel_op}$] with kernel function $g(\cdot, \ccalN_i)$ and let $\bbh_{\sigma} = [h_{\sigma 1},\ldots, h_{\sigma K}]^\top$ be a vector of parameters. The output of the kernel graph filter applied to signal $\bbx$, w.r.t. the shift operator $\bbS$, is the signal $\bbz \in \reals^N$ with $i$th entry \vskip-.2cm
\begin{equation} \label{eqn:kernel_graph_filter}
    z_i = \sum_{k=1}^{K} h_{\sigma k} [g(\bbS^{k}\bbx, \ccalN_i)]_i.
\end{equation}
\end{definition}
Definition \ref{def:kernel_graph_filter} implies the output of the kernel graph filter is a linear combination of the kernel operator applied to each $k$-shifted signal $\bbS^k \bbx$ for $1 \leq k \leq K$. Kernel graph filters thus preserve the localization properties of kernel operators, i.e., they account for the topology of the graph up to $K$-hops away accessing only information in the one-hop neighborhood.
These kernel graph filters can be further employed to define the \textit{graph-adaptive kernel activation function} as follows.

\begin{definition}[Graph-Adaptive Kernel Activation Function] \label{def:distr_kernel_activ_fct} 
Consider a scalar $\beta$ and vector $\bbh_\sigma \! = \! [h_{\sigma l1}^f, ..., h_{\sigma lK}^f]^\top$ of learnable parameters. At layer $l$, the graph-adaptive kernel activation function maps the linear features $\bbz_l^f$ [cf. \ref{eq.interm_linear}] to the output features $\bbx_l^f$ following the recursion \vskip-.2cm
\begin{equation} \label{eqn:distr_kernel_activ_fct}
    [\bbx_l^f]_i = \beta \text{ReLU}([\bbz_l^f]_i) + \sum_{k=1}^{K} h_{\sigma lk}^f [g(\bbS^{k}\bbz_l^f, \mathcal{N}_i)]_i\text{.}
\end{equation}
\end{definition}
Definition \ref{def:distr_kernel_activ_fct} combines the pointwise ReLU and kernel graph filters [cf. Def. \ref{def:kernel_graph_filter}] into a single graph-adaptive kernel activation function. This activation function is distributable and localized because, even though the resolution ---given by the shift order--- can be arbitrarily large, the kernel $g(\cdot, \ccalN_i)$ operates only in the one-hop neighborhood.

\bigbreak
In both proposed activation functions, coefficients $\{\beta, \bbh_\sigma\}$ are trainable, meaning these nonlinearities adapt the multi-hop resolution weights to the task at hand. Because these coefficients are shared among nodes, the number of parameters to learn for a graph-adaptive activation function is independent of the graph size. This allows GCNNs to \textit{scale}. Note that even though the nonlinear functions $f(\cdot, \ccalN_i)$ or the kernel functions $g(\cdot, \ccalN_i)$ act only on the one-hop neighborhood, they are applied to the shifted signals $\bbS^k\bbx$, therefore they account for the feature-graph coupling (up to $K$-hops away) in a nonlinear fashion. This is an advantage over traditional GCNNs with pointwise nonlinearities, in which the graph topology is only incorporated through linear encodings generated by graph convolutions.

Definitions \ref{def:distr_activ_fct} and \ref{def:distr_kernel_activ_fct} implement \textit{fully graph-adaptive} GCNNs that, at each layer, apply a graph convolution followed by a graph-adaptive activation function. The distributed GCNN is given by the map \vskip-.2cm
\begin{equation} \label{eqn:mapping_distr_GNN}
    \bbPhi(\bbx; \bbS, \ccalH, \ccalW) :=   \tby.
\end{equation}
The GCNN output now depends on both the coefficients $\mathcal{H}$ [cf. \eqref{eq.GCNNout}] and on the nonlinear activation functions coefficients $\ccalW = \{\bbh_{\sigma l}^f\}_{lf} \cup \{\beta\}$. 

\subsection{Properties of Graph-Adaptive Nonlinearities} \label{sbs:properties}

A key property GCNNs with pointwise activation functions inherit from graph convolutions is \textit{permutation equivariance} \cite{ruiz18-local}. The output of a GCNN is invariant to node relabeling and, more importantly, GCNNs exploit graph symmetries to generalize learned representations to different graph signals that share some of these symmetries. Herein, we show that permutation equivariance also holds for graph-adaptive nonlinearities. We will also discuss a property that is specific to the graph-adaptive localized max activation: Lipschitz stability to input perturbations.  

\smallskip \noindent \textbf{Permutation equivariance.} Consider the graph convolutional filter $\bbH(\bbS)$ [cf. \eqref{eqn:graph_conv}] and let $\bbP$ be an $N \times N$ permutation matrix satisfying $\bbP^\Tr \bbP = \bbP \bbP^\Tr = \bbI$. If we permute the GSO $\bbS$ and input $\bbx$ respectively as $\bbS' = \bbP^\Tr\bbS\bbP$ and $\bbx'=\bbP^\Tr\bbx$, we get the corresponding graph convolution output  
\begin{equation} \label{eqn:graph_conv_perm_equiv}
\bby' = \bbH(\bbS')\bbx' = \bbH(\bbP^\Tr\bbS\bbP)\bbP^\Tr\bbx = \bbP^\Tr\bbH(\bbS)\bbP\bbP^\Tr\bbx = \bbP^\Tr\bby\ .
\end{equation}
Because pointwise activation functions are scalar and by definition permutation equivariant, \eqref{eqn:graph_conv_perm_equiv} implies GCNNs with pointwise nonlinearities are invariant to node relabelings. For GCNNs with graph-adaptive activation functions, it is then desirable to retain this property. This is guaranteed by the following proposition.


\begin{proposition}[Permutation equivariance] \label{prop:perm_equiv}
Consider a graph signal $\bbx$ defined on an $N$-node graph $\mathcal{G}$ with GSO $\bbS$ . Let $\bbPhi(\bbx; \bbS, \mathcal{H}, \mathcal{W})$ be the output of a GCNN with graph-adaptive activation functions [cf. \eqref{eqn:mapping_distr_GNN}] and let $\bbP$ be an $N \times N$ permutation matrix. The GNN $\bbPhi(\bbx; \bbS, \mathcal{H}, \mathcal{W})$ satisfies \vskip-.2cm
\begin{equation}\label{eqn:perm_equivariance_distr_kernel}
    \bbPhi(\bbP^\top \bbx; \bbP^\top \bbS \bbP, \ccalH, \ccalW) = \bbP^\top \bbPhi(\bbx; \bbS, \ccalH, \ccalW)
\end{equation}
i.e., GNNs with graph-adaptive activation functions are permutation equivariant.
\end{proposition}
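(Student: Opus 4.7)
The plan is to prove \eqref{eqn:perm_equivariance_distr_kernel} by induction on the layer index $l$, exploiting the modular structure of the GCNN: at each layer the output is obtained as a bank of graph convolutions followed by a graph-adaptive activation function, with a final nodewise fully connected map closing the architecture. Since permutation equivariance is preserved under composition, it suffices to verify it for each of these three building blocks and then compose.

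For the graph convolution block this is already done: \eqref{eqn:graph_conv_perm_equiv} gives $\bbH(\bbP^\Tr \bbS\bbP)(\bbP^\Tr \bbx)=\bbP^\Tr \bbH(\bbS)\bbx$, and linearity extends this to the multi-feature aggregation in \eqref{eq.interm_linear}. The shared per-node fully connected map $\bbH_{\text{FC}}$ acts identically on each node's feature vector, so relabeling nodes just reorders the output collection; equivariance is immediate.

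The core of the argument is to show that the graph-adaptive activation functions of Def.~\ref{def:distr_activ_fct} and Def.~\ref{def:distr_kernel_activ_fct} commute with $\bbP^\Tr$. First I would establish the telescoping identity $(\bbP^\Tr \bbS\bbP)^k(\bbP^\Tr \bbx) = \bbP^\Tr \bbS^k \bbx$, obtained by repeatedly inserting $\bbP\bbP^\Tr=\bbI$. Writing $\bbz := \bbS^k \bbx$, I would then observe that the neighborhood of node $i$ in the relabeled graph with GSO $\bbP^\Tr\bbS\bbP$ is the $\bbP^\Tr$-image of the original neighborhood $\ccalN_{\pi(i)}$, where $\pi$ is the permutation induced by $\bbP$. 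Consequently, the multiset of values that the SLO $f(\cdot,\ccalN_i)$ sees after relabeling coincides with $\{[\bbz]_j : j\in\ccalN_{\pi(i)}\}$, i.e.\ the same multiset that $f(\cdot,\ccalN_{\pi(i)})$ sees in the original graph. Provided $f$ depends only on the unordered (multi)set of its neighborhood values --- which is true for $\max$ and $\text{med}$ --- we obtain in vector form $f\bigl((\bbP^\Tr\bbS\bbP)^k\bbP^\Tr\bbx,\ccalN_{(\cdot)}\bigr) = \bbP^\Tr f(\bbS^k\bbx,\ccalN_{(\cdot)})$. Summing over $k$ against the shared coefficients $h_{\sigma lk}^f$ and adding the pointwise ReLU term, which is permutation equivariant because it acts scalarwise, yields the equivariance of \eqref{eqn:distr_activ_fct}. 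The same symmetry argument transfers verbatim to the kernel operator in Def.~\ref{def:kernel_op}, since the Gaussian kernel \eqref{eqn:gaussian_kernel} is symmetric in its two arguments and is aggregated over neighbors with a single shared coefficient, yielding the equivariance of \eqref{eqn:distr_kernel_activ_fct}.

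Combining the three equivariances by induction on $l$ delivers \eqref{eqn:perm_equivariance_distr_kernel}, with the shared learnable parameters $\ccalH$ and $\ccalW$ playing no role since they are node-independent by construction. The main obstacle, and the only genuinely nontrivial step, is articulating precisely that the neighborhood functional $f(\cdot,\ccalN_i)$ and the kernel $g(\cdot,\ccalN_i)$ are \emph{set-symmetric} in their neighborhood arguments; I would encapsulate this as a one-line lemma of the form ``$f(\bbP^\Tr\bbv,\pi^{-1}(\ccalN)) = [\bbP^\Tr f(\bbv,\ccalN)]$ for any set-symmetric $f$,'' after which the remainder of the proof is bookkeeping.
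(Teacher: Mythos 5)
Your proposal follows essentially the same route as the paper's proof: permutation equivariance of the convolutional block via $\bbH(\bbP^\Tr\bbS\bbP)\bbP^\Tr\bbx = \bbP^\Tr\bbH(\bbS)\bbx$, the telescoping identity $(\bbP^\Tr\bbS\bbP)^k\bbP^\Tr\bbx = \bbP^\Tr\bbS^k\bbx$, pointwise equivariance of the ReLU term, and the observation that the localized operators commute with relabeling because one-hop neighborhoods are preserved. Your only addition is to state explicitly the set-symmetry condition on $f(\cdot,\ccalN_i)$ and $g(\cdot,\ccalN_i)$ that the paper invokes implicitly when it asserts the localized operator is permutation equivariant; this is a correct and slightly more careful rendering of the same argument, not a different one.
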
 
\begin{proof}
For the proof, we refer the reader to the Appendix.
\end{proof}

\vspace{1ex}
\noindent \textbf{Lipschitz stability.} In addition to permutation equivariance, the \textit{graph-adaptive max nonlinearity} is Lipschitz stable to input perturbations with respect to the infinity norm $\|\cdot\|_\infty$ as stated in the following proposition.
\begin{proposition}[Lipschitz stability] \label{thm:lipschitz}
Let $\ccalG$ be a graph with GSO $\bbS$. Assume that $\bbS$ is normalized by its largest eigenvalue so that its spectral norm $\rho(\bbS)$ is unitary. Let $\bbx$ be a graph signal and let $\tbx$ be a perturbation of $\bbx$. The output of the graph-adaptive max activation function \vskip-.2cm
\begin{equation} 
    [\bbz]_i = \beta \text{ReLU}([\bbx]_i) + \sum_{k=1}^{K} h_{\sigma k} [\text{max}(\bbS^{k}\bbx, \mathcal{N}_i)]_i
\end{equation}
with coefficients $|h_{\sigma k}| \leq C$ is Lipschitz stable to input perturbations in the infinity norm $\|\cdot\|_\infty$. That is, there exists a constant $L_\sigma$ such that \vskip-.2cm
\begin{equation}
\|\tbz-\bbz\|_\infty \leq L_\sigma \|\tbx-\bbx\|_\infty
\end{equation}
where $L_\sigma = |\beta| + KC\max_{k} \|\bbS^{k}\|_\infty$.
\end{proposition}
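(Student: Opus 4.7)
The plan is to bound $\|\tbz-\bbz\|_\infty$ entry by entry and leverage the $1$-Lipschitz property of both the scalar nonlinearity $\text{ReLU}$ and the coordinate-wise $\max$ operator. Writing $[\tbz - \bbz]_i$ from the definition and applying the triangle inequality produces two blocks: a pointwise $\text{ReLU}$ term weighted by $\beta$, and a sum over $k=1,\dots,K$ of differences of shifted local $\max$ operators weighted by $h_{\sigma k}$. Bounding each block separately and then combining will yield the claimed Lipschitz constant.

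First I would handle the $\text{ReLU}$ block using the elementary fact $|\text{ReLU}(a)-\text{ReLU}(b)|\le|a-b|$, which gives $|\beta|\,|[\tbx-\bbx]_i|\le |\beta|\,\|\tbx-\bbx\|_\infty$. Next, for each $k$-hop term, I would use the standard inequality $\bigl|\max_{j\in\ccalN_i} a_j - \max_{j\in\ccalN_i} b_j\bigr| \le \max_{j\in\ccalN_i}|a_j-b_j|$, applied with $a_j = [\bbS^k\tbx]_j$ and $b_j=[\bbS^k\bbx]_j$. This produces the bound
\begin{equation}
\bigl|[\text{max}(\bbS^{k}\tbx,\ccalN_i)]_i - [\text{max}(\bbS^{k}\bbx,\ccalN_i)]_i\bigr| \le \max_{j\in\ccalN_i}\bigl|[\bbS^{k}(\tbx-\bbx)]_j\bigr| \le \|\bbS^{k}(\tbx-\bbx)\|_\infty.
\end{equation}
Then the submultiplicativity of the induced $\infty$-norm yields $\|\bbS^{k}(\tbx-\bbx)\|_\infty \le \|\bbS^{k}\|_\infty\,\|\tbx-\bbx\|_\infty$.

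Finally, I would combine the two blocks. Using $|h_{\sigma k}|\le C$, the sum over $k$ is bounded by $KC\,\max_{k}\|\bbS^{k}\|_\infty\,\|\tbx-\bbx\|_\infty$. Adding the $\text{ReLU}$ contribution gives
\begin{equation}
|[\tbz-\bbz]_i| \le \Bigl(|\beta| + KC\max_{k}\|\bbS^{k}\|_\infty\Bigr)\|\tbx-\bbx\|_\infty,
\end{equation}
and taking the maximum over $i$ on the left-hand side delivers $\|\tbz-\bbz\|_\infty \le L_\sigma\|\tbx-\bbx\|_\infty$ with $L_\sigma=|\beta|+KC\max_{k}\|\bbS^{k}\|_\infty$, as desired. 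The main technical obstacle is the second step: verifying cleanly that the local $\max$ operator on signal values indexed by $\ccalN_i$ is $1$-Lipschitz in the $\infty$-norm; the rest is essentially bookkeeping. The normalization $\rho(\bbS)=1$ is not directly invoked in the calculation, but it ensures that $\max_k\|\bbS^k\|_\infty$ does not blow up uncontrollably with $K$, making the constant $L_\sigma$ meaningful.
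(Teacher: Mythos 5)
Your proposal is correct and follows essentially the same route as the paper's proof: the same triangle-inequality split into the ReLU and max blocks, the $1$-Lipschitz property of ReLU, a max-difference inequality to reduce the $k$-hop term to $\|\bbS^k(\tbx-\bbx)\|_\infty$, and submultiplicativity of the induced $\infty$-norm. The only (cosmetic) difference is that you invoke the two-sided bound $|\max_j a_j - \max_j b_j|\le \max_j|a_j-b_j|$ directly, which is slightly cleaner than the paper's one-sided $\max(f)\le\max(f-g)+\max(g)$ step, but the argument is otherwise identical.
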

\begin{proof}
For the proof, we refer the reader to the Appendix.
\end{proof}

Proposition \ref{thm:lipschitz} implies the graph-adaptive max activation is Lipschitz stable \textit{at each node}. Lipschitz stability is crucial to make learning more robust. For instance, in classification problems, a GNN with graph-adaptive max nonlinearities will more likely classify correctly a perturbed signal $\tbx$ than a GNN with non-Lipschitz activation functions. The Lipschitz constant depends on the coefficient $\beta$, the number of filter taps $K$, the weights $h_{\sigma k}$ (through $C$), and the graph (through $\max_{k} \|\bbS^{k}\|_\infty$). While we may not have full control over $\max_{k} \|\bbS^{k}\|_\infty$, $\beta$ and $K$ are design parameters, and so is the maximum value of the coefficients $h_{\sigma k}$. The Lipschitz constant of graph-adaptive max nonlinearities is thus \textit{tunable}. This represents an advantage compared to conventional pointwise activation functions, which are stable but have fixed Lipschitz constants. 

\section{Numerical Experiments}
\label{sec:sims}
\vskip-.3cm
We evaluate the performance of six activation functions that include: ReLU, localized activation functions (max and median) \cite{ruiz18-local}, and our proposed graph-adaptive localized (max and median) and kernel activation functions. Our goal is to highlight the benefits and limitations of the different nonlinearities in applications requiring distributed computations with both synthetic and real data. To train the GCNNs we used the ADAM optimizer with learning rate $10^{-3}$ and forgetting factors $\beta_1 = 0.9$ and $\beta_2 = 0.999$. As the GSO, we employ the adjacency matrix normalized by the maximum eigenvalue. For the graph-adaptive kernel nonlinearity, we set the parameter $\gamma$ in (\ref{eqn:gaussian_kernel}) to $\gamma = 0.1$\footnote{The code can be found at \url{https://github.com/bianca26/graph-adaptive-activation-functions-gnns}.}.

\subsection{Source Localization}

We consider a diffusion process over a graph of $N=40$ nodes divided into $C = 4$ communities. The goal is to determine the source community of a given diffused signal locally at a selected node. The graph is an undirected stochastic block model (SBM) with intra- and inter-community probabilities $p = 0.8$ and $q = 0.1$, respectively. The graph signals are defined as Kronecker deltas $\bbdelta_c \in \reals^N$ centered at a source node $c$ and diffused at a timestamp $t \in [0, 30]$, i.e. $\bbx_t = \bbS^t \bbdelta_c$. We choose as source node $c$ each of the $40$ nodes, thus generating a data set consisting of $1200$ graph signals. We split these samples into training, validation, and test set respectively as $80\%$, $10\%$, and $10\%$. We simulate 10 different graphs and generate 10 different splits per graph. The training and testing are performed for the highest connected node for each community, resulting in four nodes. Training is performed for $400$ epochs with a batch size of $100$ samples. 

Table \ref{tab:sourceLoc_results} shows the classification accuracy for a two layer GCNN with different number of features, $F \in \{2, 4, 8\}$. For the graph-adaptive nonlinearities, we carried out the experiments with resolutions $K=1$ and $K=2$. We only report the results for the better performing filter order, as the rest were comparable to the localized nonlinearities from \cite{ruiz18-local}. We observe both the localized nonlinearities and the proposed graph-adaptive nonlinearities significantly outperform ReLU, with a difference in classification accuracy of at least $14\%$. This result highlights the benefits of accounting for the graph topology during classification. Moreover, the graph-adaptive max and median activation functions outperform their localized versions, confirming the advantage of accounting for further away data-graph coupling. The max nonlinearities achieve a higher accuracy than medians in both the localized and graph-adaptive localized nonlinearities. This result could be caused by the fact that the median will overall smooth the signal, hence undermining some local variations important for classification. Additionally, this could also explain the lower performance of the graph-adaptive kernel nonlinearities compared to the localized nonlinearities, which might be affected by the possible redundancies in the extra information coming from neighbors.

\subsection{Distributed Finite-Time Consensus}

Distributed finite-time consensus aims to achieve consensus among all nodes in finite-time, by accessing only local information at each node. We consider learning the distributed consensus function in a data-driven fashion over an undirected SBM graph with $N=100$ nodes divided into $C = 5$ communities with intra- and inter-community probabilities $p = 0.8$ and $q = 0.1$, respectively. The graph signals are generated from a normal distribution $\mathcal{N}(\bbzero,\bbI)$. We generate $2500$ samples and split them into $80\%$, $10\%$, $10\%$ training, validation, and test sets, respectively. We average the performance across $10$ different graph realizations and $10$ different data splits for each graph. We consider a two layer GCNN with $F = 32$ features per layer followed by a per-node fully connected layer. We employ various number of filter orders $K \in \{20, 25, 30, 35\}$. Training is performed for $400$ epochs with batch size $100$. The evaluation metric is the RMSE.

Figure \ref{fig:ftc_exp1} shows the RMSE as a function of the filter order for the different nonlinearities. All GCNNs achieve a lower RMSE compared with the FIR graph filter. For the lowest order $K=20$, ReLU yields a worse RMSE than the localized and graph-adaptive nonlinearities. Once the filter order increases, and thus the degrees of freedom, adding a parametric nonlinearity seems to be less beneficial because the network has enough degrees of freedom in the filter to model the consensus function. We also experiment with the robustness of the different models to link losses by removing graph edges with different probabilities, following the random edge sampling model of \cite{isufi2017filtering}. For each method, we considered the best performing setup.  From the trained graph $\ccalG$, we randomly removed edges with probabilities in the interval $[0.025, 0.15]$. The results are shown in Figure \ref{fig:ftc_exp2}, averaged across $10$ realizations. Although all models deteriorate when the link losses increase, graph-adaptive nonlinearities handle the stochasticity better. The kernel nonlinearity seems to be the most sensitive as its performance reaches those of the other graph-adaptive alternatives.

\begin{table}[!t]
\centering
\caption{Source Localization Test Accuracy. L.: localized nonlinearities [cf. \ref{eq.localizedactiv}]; G.A.: graph-adaptive nonlinearities [cf. (\ref{eqn:distr_activ_fct}) and (\ref{eqn:distr_kernel_activ_fct})]. Between brackets the filter order $K$ is specified.}
\resizebox{.48\textwidth}{!}{%
{\small
  \begin{tabular}{l | c c c}
\Xhline{4\arrayrulewidth}
Nonlinearity/ $F$ & 2 & 4 & 8 \\
\hline
  \rowcolor{Gray}
ReLU & $\mathbf{47.9 (\pm 12.1)\%}$ & $44.9 (\pm 15.6)\%$  & $47.2 (\pm 15.5)\%$\\
Max L. & $64.5 (\pm 8.0)\%$ & $69.7 (\pm 8.6)\%$ & $\mathbf{72.2 (\pm 7.7)\%}$ \\
  \rowcolor{Gray}
Max G.A. (2) & $64.9 (\pm 7.6)\%$ & $69.2 (\pm 7.0)\%$ & $\mathbf{73.9 (\pm 6.8)\%}$ \\
Median L. & $61.6 (\pm 7.4)\%$ & $65.1 (\pm 8.3)\%$  &  $\mathbf{69.6 (\pm 7.2)\%}$ \\
\rowcolor{Gray}
Median G.A. (2) & $65.4 (\pm 7.5)\%$ & $65.6 (\pm 7.6)\%$  & $\mathbf{71.3 (\pm 7.1)\%}$ \\
Kernel G.A. (1) & $58.6 (\pm 9.5)\%$ & $57.4 (\pm 10.2)\%$ & $\mathbf{61.9 (\pm 10.7)\%}$ \\
\Xhline{4\arrayrulewidth}
\end{tabular}}}
\label{tab:sourceLoc_results}\vskip-4mm
\end{table}

\subsection{Distributed Regression}
\begin{figure*}
     \centering
     \begin{subfigure}[b]{0.22\textwidth}
         \centering
         \includegraphics[width=\textwidth]{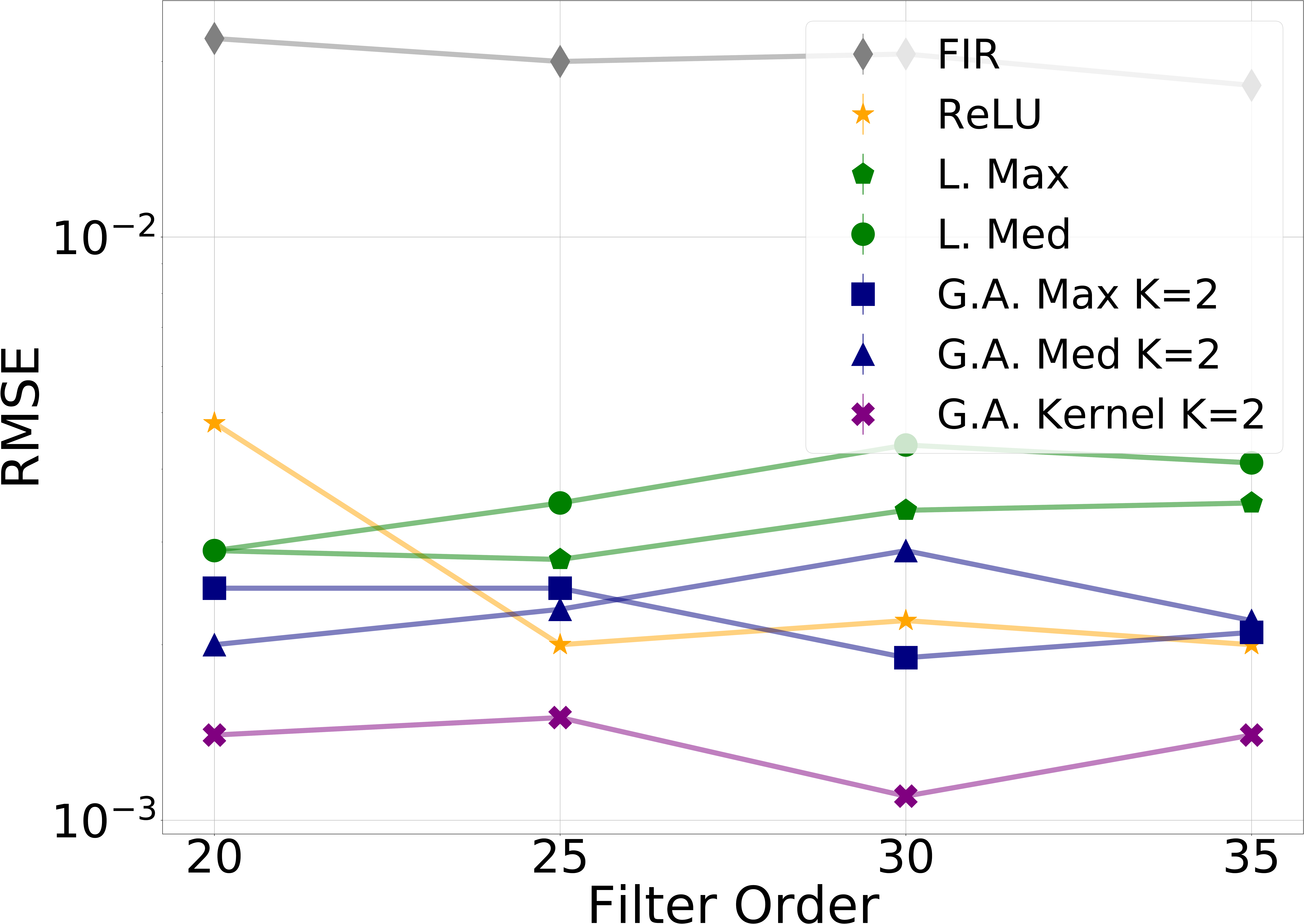}
         \caption{}
         \label{fig:ftc_exp1}
     \end{subfigure}
     \hfill
     \begin{subfigure}[b]{0.22\textwidth}
         \centering
         \includegraphics[width=\textwidth]{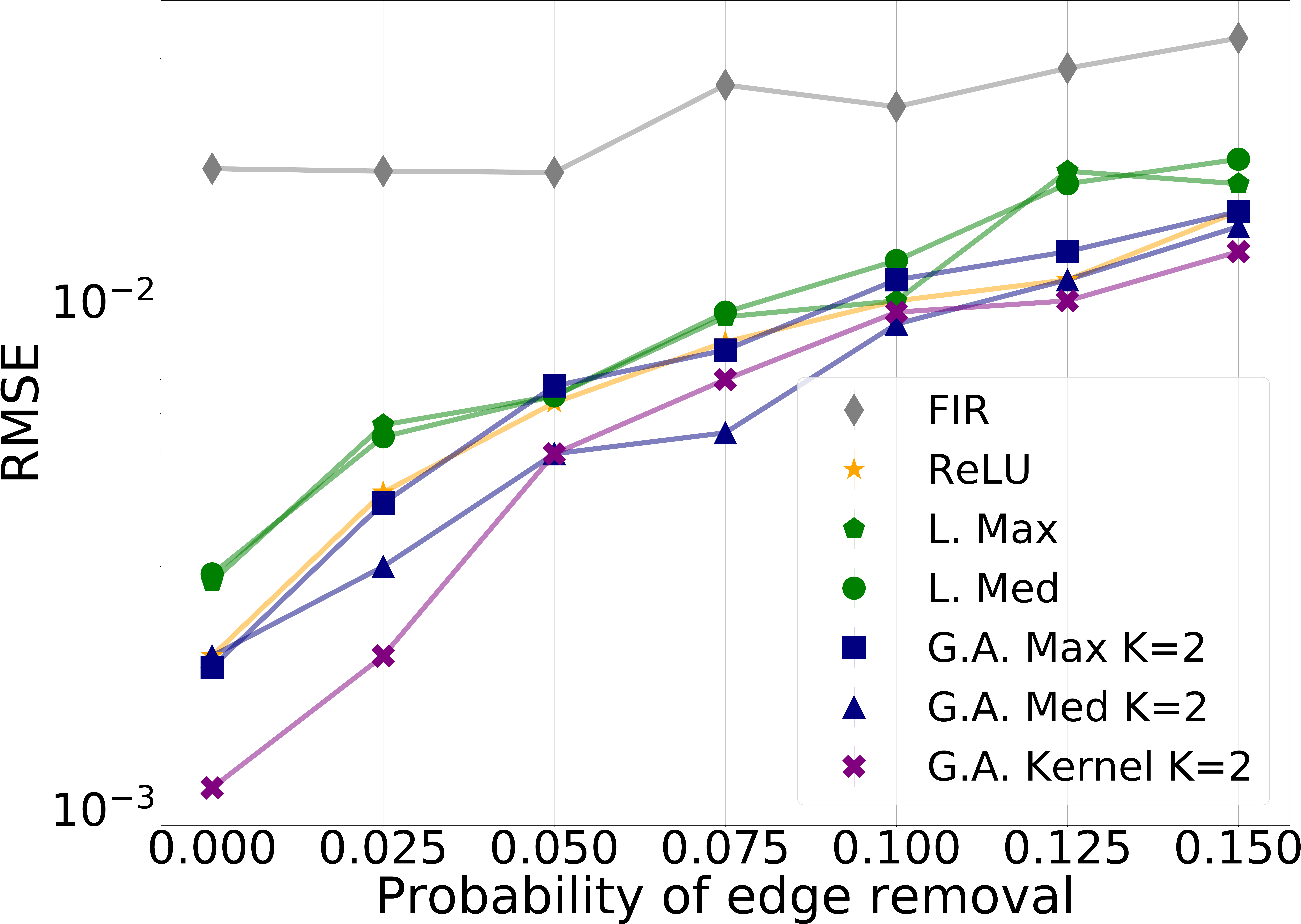}
         \caption{}
         \label{fig:ftc_exp2}
     \end{subfigure}
     \hfill
     \begin{subfigure}[b]{0.22\textwidth}
         \centering
         \includegraphics[width=\textwidth]{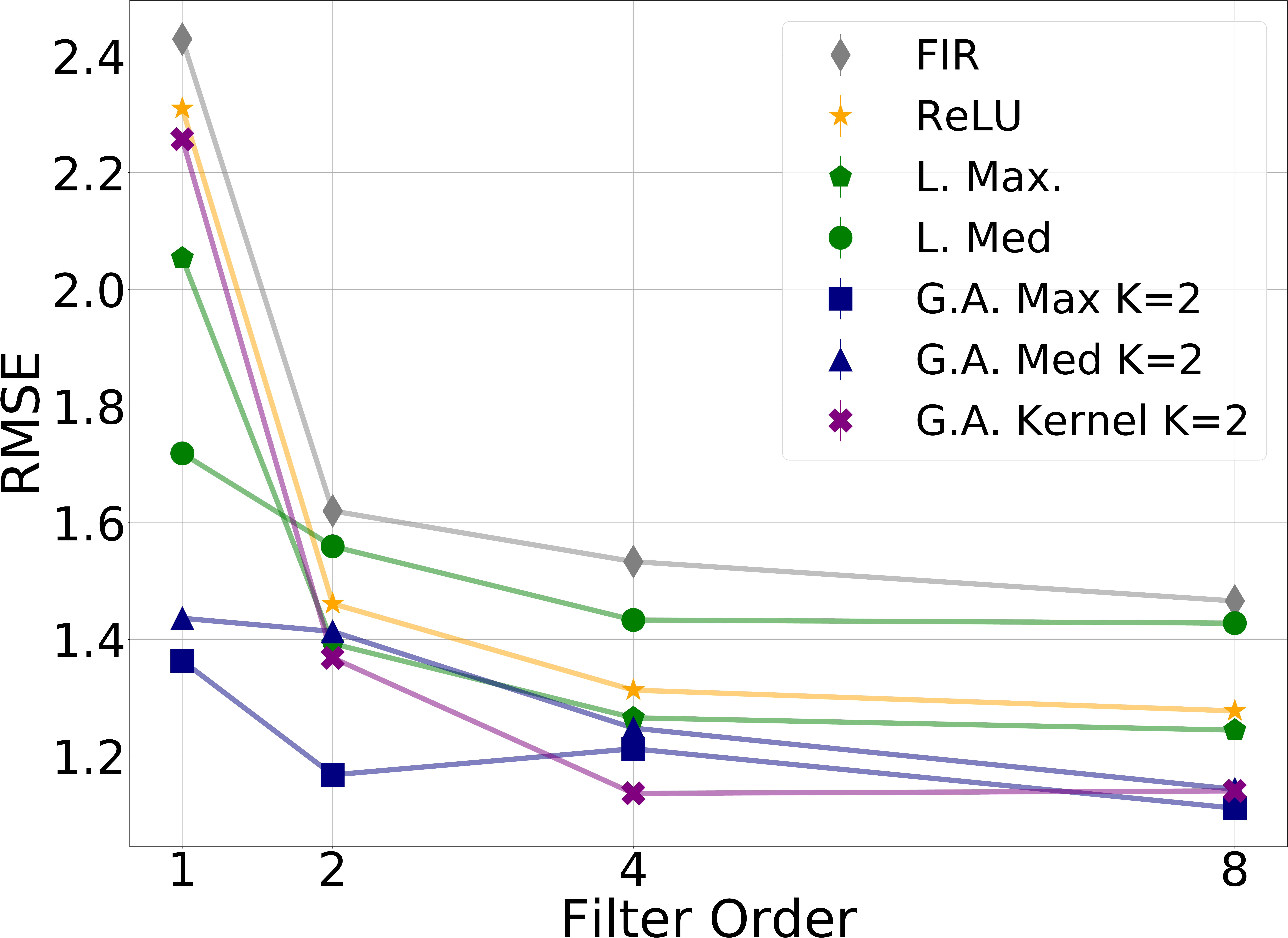}
         \caption{}
         \label{fig:molene_exp1}
     \end{subfigure}
     \hfill
     \begin{subfigure}[b]{0.22\textwidth}
         \centering
         \includegraphics[width=\textwidth]{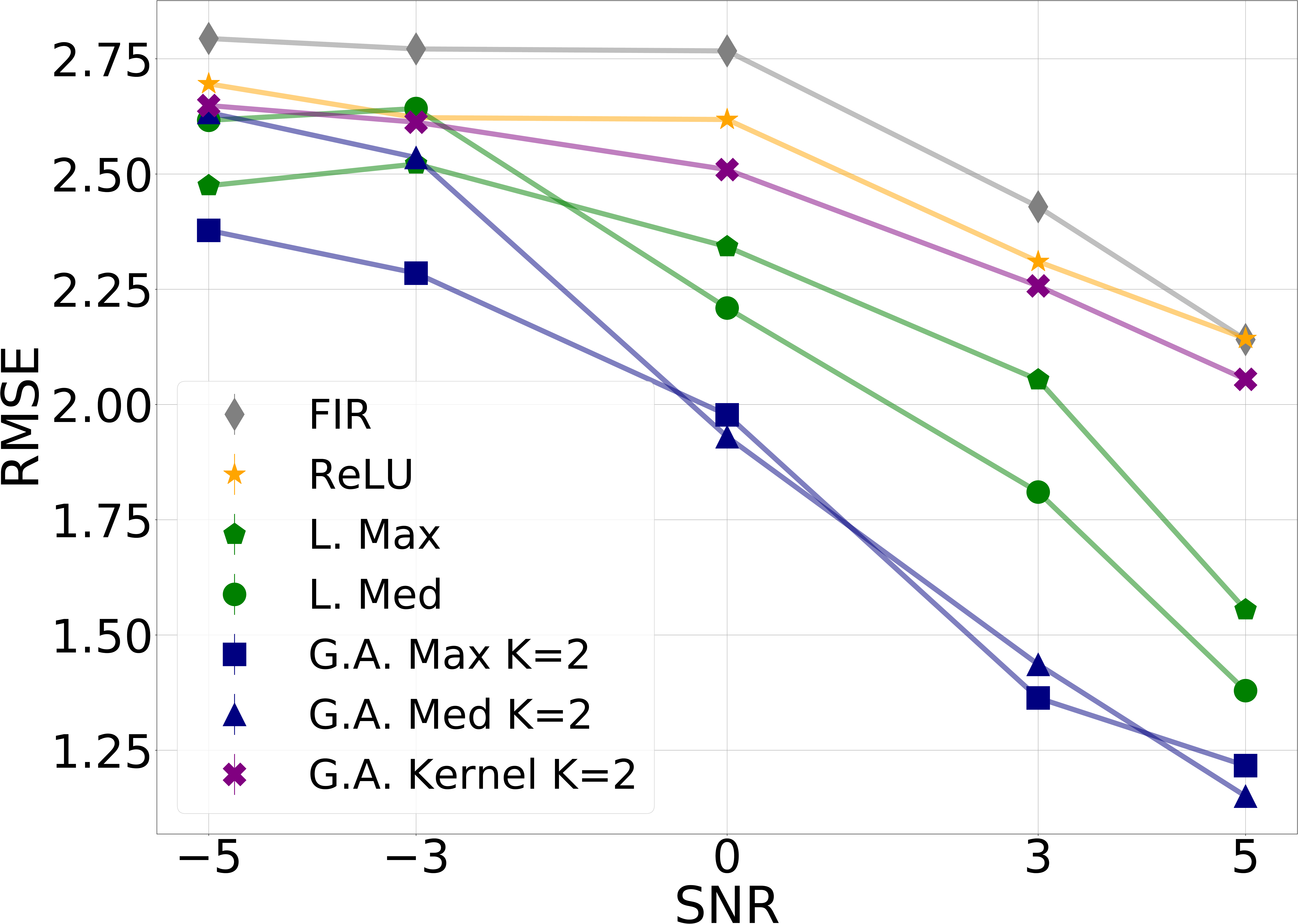}
         \caption{}
         \label{fig:molene_exp2}
     \end{subfigure}
        \caption{(a) Root mean square error (RMSE) of the GCNNs and FIR graph filters for distributed finite-time consensus as a function of filter order. (b) Robustness of the GCNNs and FIR graph filters for distributed finite-time consensus as a function of link-loss probabilities. (c) RMSE of the GCNNs and FIR graph filters for distributed regression as a function of filter order. (d) Robustness of the GCNNs and FIR graph filters for distributed regression as a function of the SNR. L.: localized nonlinearities [cf. \ref{eq.localizedactiv}]; G.A.: graph-adaptive nonlinearities [cf. (\ref{eqn:distr_activ_fct}) and (\ref{eqn:distr_kernel_activ_fct})]. $K$ denotes the filter order.}\vskip-.15cm
        \label{fig:molene_experiments}
\end{figure*}

We perform distributed regression using the Molene dataset, which contains hourly temperature measurements of $N = 32$ stations over $T = 744$ hours recorded in January 2014 in the area of Brest (France). Using the node (station) coordinates, we generate a weighted geometric graph using a ten nearest neighbor approach proposed in \cite{isufi2019forecasting}. We consider as graph signals the measurements taken at different timestamps $t \in T$. Thus, our data set consists of $744$ graph signals. On top of the original signals we add zero-mean noise with a signal-to-noise ratio (SNR) of 3 dB. These noisy signals are split into $80\%$, $10\%$, $10\%$ training, validation, and test sets, respectively. Our goal is to train a GCNN for removing the noise distributively. We employ a GCNN with one layer and a varying number of features $F \in \{1, 2, 4\}$ and filter orders $K \in \{1, 2, 4, 8\}$. We perform the training for 500 epochs with a batch size of 100 samples. We employ RMSE as the evaluation metric. The final results are averaged across 20 different splits of the data set.

Figure \ref{fig:molene_exp1} shows the RSME as a function of the filter order for the different nonlinearities. Across all GCNNs, the best performance was achieved for the highest number of features, four, so we only report the results for this setup. In the other setups, the performances were comparable. All GCNNs perform better than the FIR, but the difference is more significant for the lowest filter order $K=1$, especially in the case of graph-adaptive localized nonlinearities. This finding suggests their applicability in situations where the communication resources are limited. To further address this hypothesis, we experimented with different levels of noise added to the data. For each method, we considered the setup with the lowest filter order $K=1$. The results in Figure \ref{fig:molene_exp2} show that the graph-adaptive and localized nonlinearities outperform or achieve comparable results to ReLU. The general trend shows an increase in performance when the SNR becomes larger, with a more significant increase for the graph-adaptive localized nonlinearities. The performance of the graph-adaptive kernel nonlinearity suffers in this scenario, as it requires higher filter orders compared to the rest. We suggest using higher orders in the latter case to fully exploit the kernel power.

\subsection{Recommender Systems}
We implement a GNN-based recommender system by considering a $U \times M$ rating matrix $\bbR$, containing 100,000 ratings given by $U=943$ users to $M=1682$ movies in the MovieLens 100k dataset \cite{harper16-movielens}. The entries $[\bbR]_{um}$ are the ratings between $1$ and $5$ if user $u$ has rated movie $m$, and $0$ otherwise. We interpret the rows of $\bbR$, i.e., the user rating vectors $\bbr_u$, as graph signals on a $M$-node movie similarity network. The graph signals are split into 90\% as training and 10\% as test set, and the movie similarity network is built by computing pairwise correlations between movie rating vectors (i.e., columns of $\bbR$) containing only ratings from users in the training set. The GNN is trained to predict user ratings to a movie $m$. This is achieved by ``zeroing'' out the ratings to movie $m$ in the input graph signals $\bbr_u$, feeding them to the GNN to generate the rating prediction $[\overline{\bbr_u}]_m$, and minimizing the smooth $\ell_1$ loss $|[\bbr_u]_m-[\overline{\bbr_u}]_m|$. We consider three graph-adaptive GNNs employing the one-hop max, one-hop median, and one-hop kernel graph-adaptive nonlinearities to highlight the impact of immediate neighboring information, hence making the recommendation more localized over items. They are compared with GNNs containing ReLU activations and the one-hop max and median activations from \cite{ruiz18-local}. All GNNs consist of  $L=1$ layer, $F=32$ features using graph convolutional filters banks with $K=5$ filter taps each. We train all GNNs over $30$ epochs and in batches of $5$ for the movies Toy Story, Contact, and Return of the Jedi. The average test RMSEs over five random train-test splits for each movie are reported in Table \ref{tab:recSystems_results}.

\begin{table}[!t]
\centering
\caption{Average test RMSE over five train-test splits for the movies Toy Story, Contact and Return of the Jedi. L.: localized nonlinearities [cf. \ref{eq.localizedactiv}]; G.A.: graph-adaptive nonlinearities [cf. (\ref{eqn:distr_activ_fct}) and (\ref{eqn:distr_kernel_activ_fct})].}
\resizebox{.48\textwidth}{!}{%
{\small
  \begin{tabular}{l | c c c}
\Xhline{4\arrayrulewidth}
Nonlinearity & Toy Story & Contact & Return of the Jedi \\
\hline
  \rowcolor{Gray}
ReLU & ${0.976 (\pm 0.158)}$ & $1.022 (\pm 0.042)$  & $1.018 (\pm 0.177)$\\
Max L. & $0.999 (\pm 0.166)$ & $1.028 (\pm 0.029)$ & ${1.001 (\pm 0.162)}$ \\
  \rowcolor{Gray}
Max G.A. & $\mathbf{0.968 (\pm 0.168)}$ & $\mathbf{1.018 (\pm 0.038)}$ & $\mathbf{0.998 (\pm 0.172)}$ \\
Median L. & $0.987 (\pm 0.156)$ & $1.039 (\pm 0.055)$  &  ${1.020 (\pm 0.177)}$ \\
\rowcolor{Gray}
Median G.A. & $0.989 (\pm 0.160)$ & $1.020 (\pm 0.038)$  & ${1.021 (\pm 0.181)}$ \\
Kernel G.A. & $0.977 (\pm 0.152)$ & $1.021 (\pm 0.037)$ & ${1.014 (\pm 0.177)}$ \\
\Xhline{4\arrayrulewidth}
\end{tabular}}}
\label{tab:recSystems_results}\vskip-4mm
\end{table}
We observe the graph-adaptive max activation function outperforms the other nonlinearities for all three movies. In particular, the graph-adaptive max fares better than both the ReLU and its localized counterpart. The graph-adaptive median also outperforms the localized median for the movie Contact, and achieves comparable performance for the other movies. As for the graph-adaptive kernel activation, it performs similarly to the ReLU and does not provide much of an improvement.


\section{Conclusions}
\label{sec:conclusions}
We proposed a new family of \textit{graph-adaptive} activation functions for GNNs that capture the graph topology while also being \textit{distributable}. These activation functions incorporate the data-topology coupling into all the GNN components by combining nonlinearized features from neighboring nodes with a set of trainable parameters. These parameters adapt the information coming from neighborhoods of different resolutions to the task at hand, hence aiding learning.
The proposed graph-adaptive activation functions preserve permutation equivariance, and the graph-adaptive max activation function is Lipschitz stable to input perturbations.
Graph-adaptive nonlinearities were compared to GCNNs employing localized and pointwise nonlinearities in four different problems based on both synthetic and real-world data, showing an improved performance compared to pointwise and other state-of-the-art localized nonlinearities. Future work will be on two fronts: characterizing the stability of the proposed activation functions to perturbations in the topology and performing learning distributively.

\bigbreak
\centering
\textbf{APPENDIX}
\begin{proof}[\textbf{Proof of Prop. \ref{prop:perm_equiv}}] Let $\bbS' = \bbP^\top \bbS \bbP$ be the graph permutation and $\bbx' = \bbP^\top \bbx$ the permuted signal. From \eqref{eqn:graph_conv_perm_equiv}, the output of the graph convolution is equivariant to the action of $\bbP$. Hence, we only need to prove permutation equivariance of the graph-adaptive activation functions in (\ref{eqn:distr_activ_fct}) and (\ref{eqn:distr_kernel_activ_fct}). We write their output as the signal $\bbz$ with entries
\begin{equation} \label{eqn:perm_equiv1}
    [\bbz]_i = \beta \text{ReLU}([\bbx]_i) + \sum_{k=1}^{K} h_{\sigma k} [g(\bbS^{k}\bbx, \mathcal{N}_i)]_i
\end{equation}
where $g(\cdot, \ccalN_i)$ denotes either a shifted localized operator [cf. Def \ref{def:shifted_loc_op}] or a kernel operator [cf. Def. \ref{def:kernel_op}].
Applying the activation functions in \eqref{eqn:perm_equiv1} to the permuted signal $\bbx'$, we obtain
\begin{equation}
    [\bbz']_i  = \beta \text{ReLU}([\bbP^\top \bbx]_i) + \sum_{k=1}^{K} h_{\sigma k} [g((\bbP^\top \textbf{S} \bbP)^{k} \bbP^\top \bbx, \mathcal{N}_i)]_i \text{.}
\end{equation}
Since the ReLU activation function is pointwise, it is permutation equivariant, i.e. $\text{ReLU}(\bbP^\top \bbx) = \bbP^\top \text{ReLU}(\bbx)$. We then focus on the second term of the sum, where we observe that $(\bbP^\top \bbS \bbP)^{k} = \bbP^\top \bbS \bbP \bbP^\top \bbS \bbP ... \bbP^\top \bbS \bbP = \bbP^\top \bbS^{k} \bbP$,
which implies $(\bbP^\top \bbS \bbP)^{k} \bbP^\top \bbx = \bbP^\top \bbS^{k} \bbx.$ We can rewrite $\bbz'$ as
\begin{equation} \label{eqn:perm_equiv2}
    [\bbz']_i = \beta [\bbP^\top \text{ReLU}(\bbx)]_i + \sum_{k=1}^{K} h_{\sigma k} [g(\bbP^\top \bbS^{k} \bbx, \ccalN_i)]_i \text{.}
\end{equation}
Because function $g(\cdot, \ccalN_i)$ is localized, it acts on the one-hop neighborhoods of each node, which are preserved under node relabelings. Therefore, $g(\cdot, \ccalN_i)$ is permutation equivariant and \eqref{eqn:perm_equiv2} becomes
\begin{align*}
    [\bbz']_i &= \beta [\bbP^\top \text{ReLU}(\bbx)]_i + \sum_{k=1}^{K} h_{\sigma k} [\bbP^\top g(\bbS^{k} \bbx, \ccalN_i)]_i\\ 
    &= [\bbP^\top \beta \text{ReLU}(\bbx)]_i +\bigg[ \bbP^\top\sum_{k=1}^{K} h_{\sigma k} g(\bbS^{k} \bbx, \ccalN_i)\bigg]_i \text{.}
\end{align*}
Therefore $\bbz' = \bbP^{\top} \bbz$ and, hence, GNNs with graph-adaptive activation functions are permutation equivariant.
\end{proof}

\begin{proof}[\textbf{Proof of Prop. \ref{thm:lipschitz}}]
Let $\tbx$ be a perturbed input with $i$th entry $[\tbx]_i = [\bbx]_i + \epsilon_i$. Denoting by $\tbz$ the output obtained by applying the graph-adaptive max activation function to $\tbx$, we can write
\begin{align} \label{eqn:lipschitz_proof1}
\begin{split}
    \|[\tbz]_i-[\bbz]_i\| \leq \left\|\beta \left(\text{ReLU}([\tbx]_i)-\text{ReLU}([\bbx]_i)\right)\right\| \\
   \quad \quad + \left\|\sum_{k=1}^{K} h_{\sigma k} \left([\text{max}(\bbS^{k}\tbx, \mathcal{N}_i)]_i- [\text{max}(\bbS^{k}\bbx, \mathcal{N}_i)]_i\right) \right\|
   \end{split}
\end{align}
which is obtained by grouping terms and applying the triangle inequality. The ReLU activation is Lipschitz stable with constant one \cite{wiatowski2017mathematical}, and so 
\begin{align} \label{eqn:lipschitz_proof2}
\begin{split}
    \left\|\beta \left(\text{ReLU}([\tbx]_i)-\text{ReLU}([\bbx]_i)\right)\right\| \!\leq |\beta|\left\|[\tbx]_i-[\bbx]_i\right\|
    \!= |\beta|\left\|[\tbx-\bbx]_i\right\| \text{.}
    \end{split}
\end{align}
For the second part of the sum in \eqref{eqn:lipschitz_proof1}, we have
\begin{align*}
\begin{split}
    &\left\|\sum_{k=1}^{K} h_{\sigma k} \left([\text{max}(\bbS^{k}\tbx, \mathcal{N}_i)]_i- [\text{max}(\bbS^{k}\bbx, \mathcal{N}_i)]_i\right) \right\| \\
    &\leq \sum_{k=1}^K |h_{\sigma k}| \left\|[\text{max}(\bbS^{k}\tbx, \mathcal{N}_i)]_i- [\text{max}(\bbS^{k}\bbx, \mathcal{N}_i)]_i\right\| 
\end{split}
\end{align*}
which follows from the Cauchy-Schwarz inequality. Observe that, for any two functions $f(\cdot)$ and $g(\cdot)$, we can write the inequality $\max(f) = \max(f-g+g) \leq \max(f-g) + \max(g)$, and so
\begin{align*}
\begin{split}
    &\sum_{k=1}^K |h_{\sigma k}| \left\|[\text{max}(\bbS^{k}\tbx, \mathcal{N}_i)]_i- [\text{max}(\bbS^{k}\bbx, \mathcal{N}_i)]_i\right\| \\
    &\leq \sum_{k=1}^K |h_{\sigma k}| \|[\max(\bbS^{k}(\tbx- \bbx), \mathcal{N}_i)]_i\|\text{.}
    \end{split}
\end{align*}
We proceed by noting that 
\begin{align*}
\begin{split}
\|[\max(\bbS^{k}(\tbx- \bbx), \mathcal{N}_i)]_i\| \leq \|\max_i [\bbS^{k}(\tbx- \bbx)]_i\| \\
\leq \max_i \|[\bbS^{k}(\tbx- \bbx)]_i\| = \|\bbS^{k}(\tbx- \bbx)\|_\infty
\end{split}
\end{align*}
which allows us to write
\begin{align} \label{eqn:lipschitz_proof3}
\begin{split}
    &\left\|\sum_{k=1}^{K} h_{\sigma k} \left([\text{max}(\bbS^{k}\tbx, \mathcal{N}_i)]_i- [\text{max}(\bbS^{k}\bbx, \mathcal{N}_i)]_i\right) \right\| \\
    &\leq \sum_{k=1}^K |h_{\sigma k}|\|\bbS^{k}(\tbx- \bbx)\|_\infty \leq \sum_{k=1}^K |h_{\sigma k}|\|\bbS^{k}\|_\infty \|\tbx- \bbx\|_\infty \\
    &\leq KC \max_k \|\bbS^{k}\|_\infty \|\tbx- \bbx\|_\infty \text{.}
    \end{split}
\end{align}
Putting \eqref{eqn:lipschitz_proof2} and \eqref{eqn:lipschitz_proof3} together, we can write
\begin{align*}
\begin{split}
     \|[\tbz-\bbz]_i\| = \|[\tbz]_i-[\bbz]_i\| \leq |\beta|\left\|[\tbx-\bbx]_i\right\|
    + KC \max_k \|\bbS^{k}\|_\infty  \|\tbx- \bbx\|_\infty \text{.}
\end{split}
\end{align*}
Since this is true for all $i$ and from the definition of $\|\cdot\|_\infty$, we conclude
\begin{align*}
     \|\tbz-\bbz\|_\infty = \leq \left( |\beta| + KC \max_k \|\bbS^{k}\|_\infty  \right)\|\tbx- \bbx\|_\infty
\end{align*}
which completes the proof. Note that $\|\bbS^{k}\|_\infty \geq \rho(\bbS)^{k} = 1$ for all $k$ with $\lim_{k \to \infty} \|\bbS^{k}\|_\infty = \rho(\bbS)^{k} = 1$, so there exists $K_0$ such that, for all $k > K_0$, $\|\bbS^{k}\|_\infty \leq \max_k \|\bbS^{k}\|_\infty $ with $\max_k \|\bbS^{k}\|_\infty  = \|\bbS^{K_0}\|_\infty$.
\end{proof}

\bibliographystyle{IEEEbib}
\bibliography{myIEEEabrv,bib-nonlinear}

\end{document}